\newtheorem{definition}{Definition}
\newtheorem{proposition}{Proposition}
\begin{document}

\conferenceinfo{PLDI'13,} {June 16--19, 2013, Seattle, WA, USA.}
\CopyrightYear{2013}
\copyrightdata{978-1-4503-2014-6/13/06}
\clubpenalty=10000 
\widowpenalty = 10000

\preprintfooter{short description of paper}   

\title{AutoSynch: An Automatic-Signal Monitor \\ Based on Predicate Tagging}

\authorinfo{Wei-Lun Hung \and Vijay K. Garg}
           {Department of Electrical and Computer Engineering \\
           The University of Texas at Austin}
           {wlhung@utexas.edu, garg@ece.utexase.edu}

\maketitle
\begin{abstract}
Most programming languages use monitors with {\em explicit signals}
for synchronization in shared-memory programs. Requiring programmers to signal
threads explicitly results in many concurrency bugs due to missed notifications, or
notifications on wrong condition variables.
In this paper, we describe an implementation of an automatic signaling monitor in Java
called {\em AutoSynch} that eliminates such concurrency bugs by 
removing the burden of signaling from the programmer. We show that the
belief that automatic signaling monitors are prohibitively expensive is wrong.
For most problems, programs based on {\em AutoSynch} are almost as fast as 
those based on explicit signaling. For some, {\em AutoSynch} is even faster than
explicit signaling because it never uses {\em signalAll}, whereas the programmers
end up using {\em signalAll} with the explicit signal mechanism.

{\em AutoSynch} achieves efficiency in synchronization based on three novel ideas.
We introduce an operation called
{\em globalization} that enables the predicate evaluation in
every thread, thereby reducing context switches during the execution of the program. 
Secondly, {\em AutoSynch} avoids {\em signalAll} by using a property called {\em relay invariance}
that guarantees that whenever possible there is always at least one thread whose condition is true which has been 
signaled.
Finally, {\em AutoSynch} uses a technique called {\em predicate tagging} to efficiently
determine a thread that should be signaled.
To evaluate the efficiency of {\em AutoSynch}, we have implemented many different 
well-known synchronization problems such as the producers/consumers problem,
the readers/writers problems, and the dining philosophers problem. The results
show that {\em AutoSynch} is almost as efficient as the explicit-signal monitor
and even more efficient for some cases. 

\end{abstract}

\category{D.1.3}{Concurrent Programming}{Parallel programming}
\category{D.3.3}{Language Constructs and Features}{Concurrent programming
structures; classes and objects; control structures}
\terms
Algorithms, Languages, Performance 

\keywords
automatic signal, explicit signal, implicit signal, monitor, 
concurrency, parallel


\section{Introduction} \label{sec:intro}

Multicore hardware is now ubiquitous. Programming these multicore processors is
still a challenging task due to
bugs resulting from concurrency and synchronization.
Although there is widespread acknowledgement of difficulties 
in programming these systems, it is surprising that by and large the most 
prevalent methods of dealing with synchronization are based on ideas that were 
developed in early 70's \cite{dijk68, hoa74, bh75a}. For 
example, the most widely used threads package in C++ \cite{stro97}, 
pthreads \cite{bute97}, and the most widely used threads package in Java \cite{gjs00}, 
java.util.concurrent \cite{lea05}, are based
on the notion of monitors \cite{hoa74, bh75a}(or semaphores 
\cite{dijk65, dijk68}). 
In this paper, we propose a new method called {\em AutoSynch} based on
automatic signaling monitor 
that allows gains in productivity of the programmer as well as gain in
performance of the system.

Both pthreads and Java require programmers to explicitly
signal threads that may be waiting on certain condition. The programmer
has to explicitly declare condition variables and then signal one
or all of the threads when the associated condition becomes true.
Using the wrong waiting notification ({\em signal} versus {\em signalAll} or notify
versus notifyAll) is a frequent source of bugs in Java multithreaded
programs. In our proposed approach, {\em AutoSynch}, there is no notion of condition variables
and it is the responsibility of the system to signal appropriate threads.
This feature significantly reduces the program size and complexity.
In addition, it allows us to completely eliminate signaling more than 
one thread resulting in reduced context switches and better performance.
The idea of automatic signaling was initially explored by Hoare in \cite{hoa74},
but rejected in favor of condition variables due to efficiency considerations.
The belief that automatic signaling is extremely inefficient compared to
explicit signaling is widely held since then and
all prevalent concurrent languages based on monitors use
explicit signaling.
For example, Buhr, Fortier, and Coffin claim that automatic monitors are $10$ to $50$ times
slower than explicit signals \cite{bfc95}. The reason for this drastic slowdown in 
previous implementations of automatic monitor  is that they evaluate
all possible conditions on which threads are waiting whenever the monitor
becomes available. We show in this
paper that the widely held belief is wrong.

 With careful analysis of the conditions on which
the threads are waiting and evaluating as few conditions as possible, automatic signaling can be as efficient 
as explicit signaling.
In {\em AutoSynch}, the programmer simply specifies the predicate $P$ on which the thread is 
waiting using the construct {\em waituntil(P)} statement. 
When a thread executes the statement, it checks whether $P$ is true. If it is true, the thread can continue; 
otherwise, the thread must wait for the system to signal it. The {\em
AutoSynch} system has a condition manager that is responsible for determining
which thread to signal by analyzing the predicates and the state of the shared object. 


Fig.~\ref{fig:bb_exp} shows the difference between 
the Java and the {\em AutoSynch} implementation for the parameterized bounded-buffer
problem, a variant bounded-buffer problem (also known as the producer-consumer problem) \cite{dijk65, dijk71}. 
In this problem, producers put items into the shared buffer, while 
consumers take items out of the buffer. The {\em put} function has a parameter
$items$; the {\em take} function has a parameter, {\em num}, 
indicating the number of items taken. There are two requirements for synchronization.
First, every operation on a shared variable, such as {\em buff}, should be done 
under mutual exclusion. Second, we need {\em conditional synchronization};
a producer must wait when the buffer has no sufficient space, and a consumer 
must wait when the buffer has no sufficient items. The explicit-signal 
bounded-buffer is written in Java. A lock variable and two associated condition 
variables are used to maintain mutual exclusion and conditional 
synchronization. A thread needs to acquire the lock before entering member
functions. In addition, programmers need to explicitly associate conditional 
predicates with condition variables and call {\em signal} ({\em signalAll}) or
{\em await} statement manually. Note that, the {\em unlock} statement should be 
done in a {\em finally} block, {\em try} and {\em catch} blocks are also need for the 
{\em InterruptedException} that may be thrown by {\em await}. However, for
simplicity, we avoid the exception handling in Fig.~\ref{fig:bb_exp}. 
The automatic-signal bounded-buffer is written using {\em AutoSynch} framework.
As in line $1$, we use {\em AutoSynch} modifier to indicate that the class is a 
monitor, all member functions of the class is mutual exclusion. For conditional 
synchronization,  we use {\em waituntil} as in line 9. There are 
no {\em signal} or {\em signalAll} calls in the {\em AutoSynch} program.
Clearly, the automatic-signal monitor is
much simpler than the explicit-signal monitor. 

\begin{figure*}[ht!]
\begin{multicols}{2}
    \begin{Verbatim}[fontsize=\footnotesize,gobble=8,frame=topline,
            framesep=5mm,numbers=left,numbersep=2pt,
            label=\fbox{\small\emph{Explicit-Signal}}]
        class BoundedBuffer {
          Object[] buff;  
          int putPtr, takePtr, count;
          Lock mutex = new ReentrantLock();
          Condition insufficientSpace = mutex.newCondition();
          Condition insufficientItem = mutex.newCondition();
          public BoundedBuffer(int n) {
            buff = new Object[n];
            putPtr = takePtr = count = 0;
          }
          public void put(Object[] items) {
            mutex.lock();
            while (items.length + count > buff.length) {
              insufficientSpace.await();
            }
            for (int i = 0; i < items.length; i++) {
              buff[putPtr++] = items[i];
              putPtr %= buff.length;
            }
            count += items.length;
            insufficientItem.signalAll();
            mutex.unlock();
          }
          public Object[] take(int num) {
            mutex.lock();
            while (count < num) {
              insufficientItem.await();
            }
            Object[] ret = new Object[num];
            for (int i = 0; i < num; i++) {
              ret[i] = buff[takePtr++];
              takePtr %= buff.length;
            }
            count -= num;
            insufficientSpace.signalAll();
            mutex.unlock();
            return ret;
          }
        }
    \end{Verbatim} 
    \begin{Verbatim}[fontsize=\footnotesize,gobble=8,frame=lines,framesep=5mm,
            numbers=left,numbersep=2pt,
            label=\fbox{\small\emph{Automatic-Signal}}]
        AutoSynch class BoundedBuffer { 
          Object[] buff; 
          int putPtr, takePtr, count; 
          public BoundedBuffer(int n) {
            buff = new Object[n];
            putPtr = takePtr = count = 0;
          }
          public void put(Object[] items) { 
            waituntil(count + items.length <= buff.length); 
            for (int i = 0; i < items.length; i++) {
              buff[putPtr++] = items[i];
              putPtr %= buff.length;
            }
            count += items.length; 
          } 
          public Object[] take(int num) { 
            waituntil(count >= num);
            Object[] ret = new Object[num];
            for (int i = 0; i < num; i++) {
              ret[i] = buff[takePtr++]; 
              takePtr %= buff.length; 
            }
            count -= num;
            return ret;
          }
        }
    \end{Verbatim}
\end{multicols}
    \caption{The parameterized bounded-buffer example}
    \label{fig:bb_exp}
\end{figure*}


\begin{figure}[ht!]
  \centering
  \includegraphics[width=70mm]{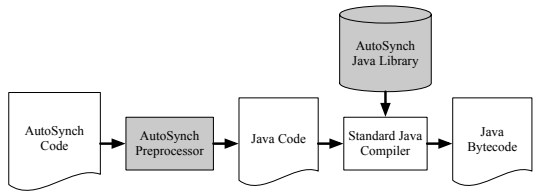}
  \caption{The framework of {\em  AutoSynch}}
  \label{fig:fw}
\end{figure}

To facilitate the automatic-signal mechanism in Java, we have implemented the 
framework of {\em AutoSynch} illustrated in Fig.~\ref{fig:fw}. 
The framework is composed of a 
preprocessor and a Java condition manager library. The preprocessor translates 
{\em AutoSynch} code into traditional Java code. 
Our automatic-signal mechanism and 
developed techniques were implemented in the Java condition manager library, 
which is responsible for monitoring the state of the monitor object
and signaling an appropriate thread.

%

In this paper, we argue that automatic signaling is generally as fast as explicit 
signaling (and even faster for some examples). In Section \ref{sec:sigAll}, we give reasons
for the efficiency of automatic signaling. In short, the explicit signaling has 
to resort to {\em signalAll} in some examples; however, our automatic signaling never 
uses {\em signalAll}. Thus {\em AutoSynch} is 
considerably faster for synchronization problems with 
{\em signalAll}. The design principle underlying {\em AutoSynch} is to reduce 
the number of context switches and predicate evaluations.

\begin{description}
    \item{\bf Context switch:} A context switch requires a certain 
        amount of time to save and load registers and update various tables and
        lists. Reducing unnecessary context switches boosts the performance of the system.
        A {\em signalAll} call introduces unnecessary context switches; therefore,
        {\em signalAll} calls are never used in {\em AutoSynch}. 
    \item {\bf Predicate evaluation:} In the 
        automatic-signal mechanism, signaling a thread is the 
        responsibility of the system. The number of predicate evaluations is 
        crucial for efficiency in deciding which thread should be signaled. 
        By analyzing the structure of the predicate, our system reduces the number of predicate evaluations.
\end{description}

There are three important novel concepts in {\em AutoSynch} that enables efficient automatic
signaling --- {\em globalization of predicates, relay invariance}, and {\em predicate
tagging}.

The technique of {\em globalization} of a predicate $P$ is used to reduce the number of 
context switches for its evaluation. In the current systems, only the thread that is waiting
for the predicate $P$ can evaluate it. When the thread is signaled, it wakes up, acquires the
lock to the monitor and then evaluates the predicate $P$. If the predicate $P$ is false, it
goes back to wait resulting in an additional context switch. In {\em AutoSynch} system, the thread
that is in the monitor evaluates the condition for the waiting thread and wakes it only
if the condition is true. Since the predicate $P$ may use variables local to the thread
waiting on it, {\em AutoSynch} system derives a globalization predicate $P'$ of
the predicate $P$, such that other threads can evaluate $P'$. 
The details of globalization are in Section \ref{sec:globalization}.


The idea of {\em relay invariance} is used to avoid {\em signalAll} calls in {\em
AutoSynch}.
The relay invariance ensures that if there is any thread whose waiting condition is true, then
there exists at least one thread whose waiting condition is  true and is signaled by the system.
With this invariance, the {\em signalAll} call 
is unnecessary in our automatic-signal mechanism. This mechanism reduces 
the number of context switches by avoiding {\em signalAll} calls. 
The details of this approach are in Section \ref{sec:relay}.

The idea of {\em predicate tagging} is used to accelerate the process of deciding which thread to signal.
All the waiting conditions are analyzed and tags are assigned to every predicate
according to its semantics. To decide which thread should be 
signaled, we identify tags that are most likely to be true after examining the 
current state of the monitor. Then we only evaluate the predicates with 
those tags. 
The details of predicate tagging are in Section \ref{sec:tag}.

Our experimental results indicate that {\em AutoSynch} can significantly improve
performance compared to other automatic-signal mechanisms \cite{bh05}. In \cite{bfc95,bh05}
the automatic-signal mechanism is 10-50 times
slower than the explicit-signal mechanism; however, {\em AutoSynch} is 
only 2.6 times slower than the explicit-signal mechanism even in the worst 
case of our experiment results. Furthermore, {\em AutoSynch} is 26.9
times faster than the explicit-signal mechanism in the parameterized
bounded-buffer problem that relies on 
{\em signalAll} calls. Besides, the experimental results also show that {\em AutoSynch} 
is scalable; the performance of {\em AutoSynch} is stable even if the number of 
threads increases for many problems conducted in the paper. 

Although the experiment results show that {\em AutoSynch} is 2.6 times slower than 
the explicit-signal mechanism in the worst case, it is still desirable to have automatic signaling.
First, automatic signaling simplifies the task of concurrent programming.
            In explicit-signal monitor, it is the responsibility of programmers to 
        explicitly invoke a {\em signal} call on some condition variable for
        conditional synchronization. Using the wrong notification, and signaling
        a wrong condition variable are frequent sources of bugs.  The idea is analogous to automatic
        garbage collection.
Although garbage collection leads to decreased
performance because of the overhead in deciding which memory to free, programmers 
avoid manual memory deallocation. As a consequence, memory leaks and certain 
bugs, such as dangling pointers and double free bugs, are reduced. 
Similarly, automatic-signal mechanism consumes computing resources 
in deciding which thread to be signaled; programmers avoid explicitly invoking 
{\em signal} calls. As a result, some bugs, such as using wrong notification and
signaling a wrong condition variable, are eliminated.  Secondly,
 in explicit-signal monitor, the principle of separation of concerns is violated.
    Any method that changes the state of the monitor must be aware of all the 
    conditions, which other threads could be waiting for, 
    in other methods of the monitor. 
     The intricate relation between
        threads for conditional synchronization breaks the modularity and 
        encapsulation of programming.  
   Finally,
  {\em AutoSynch} can provide  
        rapid prototyping  in developing programs and accelerating 
        product time to market. Moreover, 
        a correct
        automatic-signal implementation is helpful in debugging an
        explicit-signal implementation.

Although this paper 
focuses on Java, our techniques are also applicable to other programming 
languages and models, such as pthread and C\# \cite{hwg03}.

This paper is organized as follows. Section \ref{sec:bg} gives the background
of the monitor. 
Section \ref{sec:sigAll} explains why {\em signalAll} is required for
explicit-signal monitor but not automatic-signal monitor. The concepts of 
{\em AutoSynch} are presented in Section \ref{sec:concept} and the practical 
implementation details are discussed in Section  \ref{sec:imp}. The proposed 
methods are then evaluated with experiments in Section \ref{sec:eval}.  
Section \ref{sec:conclu} gives the concluding remarks.

\section{Background: monitor} \label{sec:bg} 
Monitor is an abstract object or module containing shared data to be used safely
by multiple threads in concurrent programming. Monitor can
be defined by two characteristics, mutual exclusion and conditional 
synchronization. Mutual exclusion guarantees that at most one thread can 
execute any member function of a monitor at any time.  Threads  acquire the 
lock of the monitor to acquire the privilege for accessing it. Conditional synchronization 
maintains the execution order between threads. Threads may wait for some 
condition to be met and release the monitor lock temporarily. After the 
condition has been met, threads then re-acquire the lock and continue to 
execute. According to Buhr and Harji \cite{bh05}, monitors can be divided into 
two categories according to the different implementations of conditional 
synchronization. 
\begin{description}
    \item{\bf Explicit-signal monitor} In this type of monitor, condition
        variables, {\em signal} and {\em await} statements are used for synchronization. 
    Programmers need to associate assertions with condition variables manually.
    This mechanism involves two or more threads. A thread  waits on some condition variable 
    if its predicate is not true. When another thread detects that the state has 
    changed and the predicate is true, it explicitly signals the 
    appropriate condition variable.
    \item{\bf Automatic-signal (implicit-signal) monitor} This kind of monitor 
    uses {\em waituntil}
    statements, such as line 9 in automatic-signal program in
    Fig.~\ref{fig:bb_exp}, instead of condition variables for
    synchronization. Programmers do not need to associate assertions with
    variables, but use {\em waituntil} statements directly. In
    monitor, a thread will wait as long as the condition of a {\em waituntil}
    statement is false, and execute the remaining tasks only after the condition 
    becomes true. The responsibility of signaling a waiting thread is that of 
    the system rather than of the programmers. 
\end{description}


\section{{\em signalAll} requirement in explicit} 
\label{sec:sigAll}
The {\em signalAll} call is essential in explicit-signal mechanism when programmers
do not know which thread should be signaled. In Fig.~\ref{fig:bb_exp}, a 
producer must wait if there is no space to put $num$ items, while a consumer 
has to wait when the buffer has insufficient items.
Since producers and consumers can put and take different numbers of items
every time, they may wait on different conditions to be met. Programmers 
do not know which producer or consumer should be signaled at runtime. 
Therefore, the {\em signalAll} call is used instead of {\em signal} calls in
line $21$ and $35$.
Although programmers can avoid using {\em signalAll} calls by writing
complicated code that associates different conditions to different condition 
variables; the complicated code makes the maintenance of the program
bad. 


The {\em signalAll} call is expensive; it may decreases the performance because 
it introduces redundant context switches, requiring 
computing time to save and load registers and update various tables and lists.
Furthermore, {\em signalAll} calls cannot increase parallelism because threads
are forbidden to access a monitor simultaneously. Although multiple threads are
signaled at a time, only one thread is able to acquire the monitor. Other 
threads may need to go back to waiting state since another thread may change 
the status of the monitor. Suppose in Fig.~\ref{fig:bb_exp}, the 
buffer has 64 items after a producer finishes a put call. The producer calls 
{\em insufficientItem.signalAll()} in line $21$ before completing the call.
$10$  
waiting consumers are signaled; each of them is waiting to take $48$ items.
Suppose the consumer $C$ re-acquires the lock first and takes $48$ items. The
remaining items, $16$, are insufficient for the other threads; they
make context switches, re-evaluate their predicates, and go back to waiting 
state. Theses context switches are redundant since the $9$ threads do not 
make any progress but only go back to waiting state. 
Therefore, if we avoid using the {\em signalAll} call and only signal a thread that 
is most likely to make progress, the unnecessary context switches can be
reduced.

\section{{\em AutoSynch} concepts} \label{sec:concept}

\subsection{Predicate evaluation} \label{sec:globalization}
In {\em AutoSynch}, it is the responsibility of the system to signal appropriate 
threads automatically. The predicate evaluation is crucial in deciding which
thread should be signaled. We discuss how to preform predicate evaluations
of {\em waituntil} statements. 

A predicate $P(\vec{x}): X \rightarrow \mathbb{B}$ is a Boolean condition, 
where $X$ is the space spanned by the variables $\vec{x}=(x_1, \dots, x_n)$. 
A variable of a monitor object is a shared variable if it is accessible by every 
thread that is accessing the monitor. The set of shared variables is denoted by 
$S$. The set of local variables, denoted by $L$, is 
accessible only by a thread calling a function in which the variables are declared. 

Predicates can be used to describe the properties of conditions. In our
approach,
every condition of {\em waituntil} statement is represented by a predicate. We say
a condition has been met if its representing predicate is true; otherwise, the
predicate is false. 
Furthermore, we assume that every predicate, $P = \vee_{i=1}^nc_i$, is in 
disjunctive normal form (DNF), where $c_i$ is defined as the conjunction of a 
set of atomic Boolean expressions. For example, a predicate $(x = 1) \wedge 
(y = 6) \vee (z \ne 8)$ is DNF, where $c_1 = (x = 1) \wedge (y = 6)$ and $c_2 = 
(z \ne 8)$. Note that, every Boolean formula can be converted into DNF using 
De Morgan's laws and distributive law. 

Predicates can be divided into two categories based on the type of their 
variables \cite{bh05}.
\begin{definition}[Shared and complex predicate]
    Consider a predicate $P(\vec{x}): X \rightarrow \mathbb{B}$. If $X 
    \subseteq S$, $P$ 
    is a shared predicate. Otherwise, it
    is a complex predicate. 
\end{definition}

The automatic-signal monitor has an efficient implementation \cite{kes77} by 
limiting the predicate of a {\em waituntil} to a shared predicate; however, 
we do not limit the predicate of a {\em waituntil} statement to a shared
predicate. The reason is that this limitation will lead {\em AutoSynch} to be less
attractive and practical since conditions including local variables cannot be 
represented in {\em AutoSynch}.

Evaluating a complex predicate in all threads is unattainable 
because the accessibility of the local variables in the predicate is limited 
to the thread declaring them. To evaluate a complex predicate in all 
threads, we treat local variables as constant values at runtime and define 
globalization as follows. 
\begin{definition}[Globalization]
    Given a complex predicate $P(\vec{x}, \vec{a}): X \times A \rightarrow 
    \mathbb{B}$, where $X \subseteq S$ and $A \subseteq L$. The globalization 
    of $P$ at runtime {\textbf t} is the new shared predicate
    \[
    G_t(\vec{x}) = P(\vec{x}, \vec{a_t}),
    \]
    where $\vec{a_t}$ is the values of $\vec{a}$ at runtime {\textbf t}. 
\end{definition}

The globalization can be applied to any complex predicate; a shared 
predicate can be derived from the globalization. For example, in
Fig.~\ref{fig:bb_exp}, the consumer $C$ wants to take $48$ items at some
instant of time. Applying the globalization to the complex predicate $(count \ge num)$
in line 19, we derive the shared predicate $(count \ge 48)$.

The following proposition shows that the complex predicate evaluation of
{\em waituntil} statement in all threads can be achieved through the globalization. 
\begin{proposition} \label{pro:glob}
    Consider a complex predicate $P(\vec{x}, \vec{a})$ in a \textbf
    {waituntil} 
    statement. $P(\vec{x}, \vec{a})$ and its globalization 
    $P(\vec{x}, \vec{a_t})$ are semantically equivalent during the \textbf{waituntil} 
    period, where $t$ is the time instant immediately before invoking the 
    \textbf{waituntil} statement.  
\end{proposition}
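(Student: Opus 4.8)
The plan is to exploit the defining feature of the \textbf{waituntil} period: throughout it the thread executing the statement performs no write to its own local variables. First I would fix notation, letting the \textbf{waituntil} period be the interval $[t,t']$, where $t$ is the instant immediately before the thread evaluates \textbf{waituntil}$(P)$ and $t'$ is the instant at which the system signals it and it resumes (if $P$ is already true at $t$ the interval is degenerate). Two observations then drive the proof. Since $\vec{a} \subseteq L$, by the definition of a local variable only the thread that declared $\vec{a}$ may assign to it. And during $[t,t']$ that thread either is blocked, executing none of its own statements, or is merely evaluating the Boolean condition $P$, which --- being a predicate $X \times A \rightarrow \mathbb{B}$ --- has no side effects. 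Hence no component of $\vec{a}$ is written in $[t,t']$, so its valuation is constant and equal to $\vec{a_t}$ over the whole period.

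Given this, the equivalence follows directly. At any instant $s \in [t,t']$, let $\vec{x_s}$ be the current valuation of the shared variables $\vec{x}$, which other threads may have changed. Because $\vec{a}$ still evaluates to $\vec{a_t}$ at $s$, the condition of the \textbf{waituntil} statement evaluates to $P(\vec{x_s}, \vec{a_t})$, which is by definition the value $G_t(\vec{x_s})$ of the globalization $G_t(\vec{x}) = P(\vec{x}, \vec{a_t})$. Since this holds at every instant of the period, $P(\vec{x}, \vec{a})$ and $P(\vec{x}, \vec{a_t})$ agree pointwise there and are therefore semantically equivalent during the \textbf{waituntil} period. Note the claim is not that either side is constant over the period --- it need not be, since $\vec{x}$ may change --- only that the two always take the same truth value.

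The step requiring the most care is pinning down ``semantically equivalent during the \textbf{waituntil} period'' and, with it, justifying that $\vec{a}$ cannot be altered indirectly: for instance if some component of $\vec{a}$ is a reference that another thread can reach through the monitor's shared state. I would dispatch this by appealing to the monitor model --- the waiting thread has released the monitor lock and retains no channel through which another thread mutates the valuations $\vec{a_t}$ used by globalization --- and, where references to mutable objects genuinely occur, by taking the globalization over the values dereferenced at time $t$, which is consistent with the treatment of $num$ in the bounded-buffer example of Fig.~\ref{fig:bb_exp}. The remaining ingredients, namely characterizing the period and performing the pointwise comparison, are routine.
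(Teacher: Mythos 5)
Your proposal is correct and follows essentially the same route as the paper: the local variables $\vec{a}$ cannot be modified during the \textbf{waituntil} period (only the invoking thread can access them, and it performs no writes while waiting), so they retain the value $\vec{a_t}$ and $P(\vec{x},\vec{a})$ coincides pointwise with $P(\vec{x},\vec{a_t})$. Your additional remarks on the waiting thread's own inactivity and on aliasing through shared references are refinements the paper leaves implicit, but they do not change the argument.
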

\begin{proof}
    Only the thread invoking the {\em waituntil} statement can access the
    local variables of the predicate; all other threads are unable to change
    the values of those local variables. Therefore, the value of $\vec{a}$
    cannot be changed 
    during the {\em waituntil} period. Since $\vec{a_t}$ is the value of $\vec{a}$
    immediately before invoking the {\em waituntil} statement, $P(\vec{x}, \vec{a})$
    and $P(\vec{x}, \vec{a_t})$ are semantic equivalent during the {\em waituntil}
    period. 
\end{proof}

Proposition \ref{pro:glob} enables the complex predicate evaluation of
{\em waituntil} statement in all threads. 
Given a complex predicate in a {\em waituntil} statement, in the sequel we substitute
all the local variables with their values immediately before invoking the
statement. The predicate can now be evaluated in all other threads during the
{\em waituntil} period. 

\subsection{Relay invariance} \label{sec:relay}
As mentioned in Section ~\ref{sec:sigAll}, {\em signalAll} calls are sometimes unavoidable
in the explicit-signal mechanism. In {\em AutoSynch}, {\em signalAll} calls are 
avoided by providing the {\em relay invariance}. 

\begin{definition}[Active and inactive thread] 
    Consider a thread that tries to access a monitor. If it is not waiting in a
    {\em waituntil} statement or has been signaled, then it is an {\em active} 
    thread for the monitor. Otherwise, it is an {\em inactive} thread. 
\end{definition}

\begin{definition}[Relay invariance]
    If there is a thread waiting for a predicate that is true, then there is at
    least one active thread; i.e., suppose $W_T$ is the set of waiting threads
    whose conditions have become true, $A_T$ is the set of active threads,
    then
    \[
        W_T \ne \phi \Rightarrow A_T \ne \phi 
    \]
    holds at all time. 
\end{definition}

{\em AutoSynch} uses the following mechanism for signaling.

{\em Relay signaling rule}: When a thread exits a monitor or goes into waiting 
    state, it checks whether there is some thread waiting on a condition that 
    has been true. If at least one such waiting thread exists, it signals that 
    thread.
    
\begin{proposition} \label{pro:relay_signal}
     The relay signaling rule guarantees relay invariance. 
\end{proposition}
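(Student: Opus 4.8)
The plan is to establish relay invariance as an inductive invariant over the execution of the monitor. Mutual exclusion serializes every action that can affect either $W_T$ or $A_T$: the thread holding the lock is the only one that can mutate a shared variable, a thread acquires or surrenders the lock atomically, and a signal is issued by the lock holder. So I would describe the execution as a sequence of \emph{events} $e_1, e_2, \dots$ --- a thread calling a monitor method, acquiring the lock, mutating shared state, being signalled, exiting the monitor, and blocking in a \emph{waituntil} whose condition is currently false --- and argue that if $W_T \ne \phi \Rightarrow A_T \ne \phi$ holds before $e_i$ then it still holds after $e_i$. The base case is immediate: before any thread touches the monitor, $W_T = A_T = \phi$ and the implication is vacuous.

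For the inductive step I would note that the implication can be broken only in two ways: $e_i$ empties $A_T$ while $W_T \ne \phi$, or $e_i$ inserts a thread into $W_T$ while $A_T = \phi$. Going through the event kinds: calling a monitor method and being signalled only \emph{add} threads to $A_T$ (a signalled thread is active by the definition of active/inactive thread), so they cannot break the implication; a shared-variable mutation can only grow $W_T$ --- by flipping some waiting thread's predicate from false to true --- or shrink it, and (using Proposition~\ref{pro:glob}, since the local variables of a waiting predicate are frozen) no event other than a shared-variable mutation can enlarge $W_T$; and the only events that remove a thread from $A_T$ are a thread exiting the monitor and a thread re-blocking in a \emph{waituntil}. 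So only these last two event families, together with mutations, need attention.

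The remaining cases are closed by the relay signalling rule. If $e_i$ is a shared-variable mutation, it is performed by the thread currently inside the monitor; that thread is active, and since a mutation occurs strictly within a method body rather than at the instant the lock is released, the thread is still in $A_T$ immediately after $e_i$, so $A_T \ne \phi$. If $e_i$ removes the last member of $A_T$ --- i.e. a thread $T$ exits or re-blocks in a \emph{waituntil} --- then by the relay signalling rule $T$, as part of that same step, inspects the waiting threads and, if any has a true condition (that is, if $W_T \ne \phi$ after accounting for whatever $T$ just did), signals one of them, which thereby becomes active; hence $A_T \ne \phi$. If instead $W_T = \phi$ after $e_i$, the implication holds vacuously. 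This is precisely the ``relay'': a thread signalled in this way may re-enter the monitor, find its own predicate false, and re-block, at which point the rule fires again and passes the obligation to another element of $W_T$, so the invariant is re-established at each link of the chain.

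The step I expect to be the main obstacle is the boundary bookkeeping: one must argue carefully that the check prescribed by the relay signalling rule takes place while $T$ still holds the right to signal --- atomically with releasing the lock or transitioning to the waiting state --- and that it tests exactly the post-mutation shared state, so that no interleaving can wedge a ``$W_T \ne \phi \wedge A_T = \phi$'' configuration in between. Pinning down the notion of ``event'', the serialization argument, and the fact that shared variables change only under the lock is where the real content lies; once that scaffolding is in place, the case analysis above is routine. It is also worth observing explicitly that relay invariance is a safety property, so even though threads may oscillate between active and inactive arbitrarily often, no progress or termination argument is required.
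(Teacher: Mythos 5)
Your proof is correct, but it is structured quite differently from the paper's. The paper argues directly and ``backwards'': it takes an arbitrary thread $T$ waiting on a predicate $P$ that is currently true, observes that $P$ was false when $T$ blocked, infers that some thread $R$ must subsequently have been active and changed the shared state to make $P$ true, and then invokes the relay signaling rule to conclude that $R$ signaled $T$ or some other waiting thread with a true condition before exiting or blocking, so $A_T \ne \phi$. You instead prove relay invariance as an inductive invariant over a serialized sequence of monitor events, with a case analysis showing no event kind can create the bad configuration $W_T \ne \phi \wedge A_T = \phi$: only shared-variable mutations can enlarge $W_T$ (using Proposition~\ref{pro:glob} to freeze the local variables), the mutating thread is itself still active at that point, and the only events that shrink $A_T$ (exit, re-block) are exactly the points where the rule forces a signal. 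Your route is more work but also more watertight: the paper's three-line argument quietly relies on the rule firing again along the whole chain (the thread signaled by $R$ may itself re-block or exit later, at which point the obligation must be handed on), which your induction handles explicitly as just another application of the exit/re-block case, whereas the paper leaves it implicit. Conversely, the paper's argument buys brevity by reasoning about the last state-change that made some waiting predicate true rather than carrying an invariant through every event, and it avoids the event-model scaffolding (serialization under the monitor lock, atomicity of the check with lock release) that you correctly identify as the place where your version needs the most care. Both proofs rest on the same two pillars --- mutual exclusion plus the relay signaling rule being executed at every exit from active status --- so the difference is one of proof architecture, not of mathematical content.
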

\begin{proof}
    Suppose a thread $T$ is waiting on the predicate $P$ that is true. Since 
    $T$ is waiting 
    on $P$, $P$ must be false before $T$ went to waiting state. There must 
    exist another active thread $R$ after $T$ such that $R$ changed the state 
    of the monitor and made  $P$ true. According to the rule, $R$ must signal
    $T$ or another thread waiting for a condition that is true before leaving the
    monitor or going into waiting state. The thread signaled by $R$ then
    becomes active. Therefore, the relay invariance holds. 
 
 \end{proof}

The concept behind relay invariance is that, the privilege to enter the monitor
is transmitted from one thread to another thread whose condition has become true. 
For example, in Fig.~\ref{fig:bb_exp}, the consumer $C$ tries to take
$32$ items; however, only $24$ items are in the buffer at this moment. Then, $C$
waits for the predicate $P:  (count \ge 32)$ to be true. A producer, $D$, becomes 
active after 
$C$; $D$ puts $16$ items into the buffer and then leaves the monitor. Before 
leaving, $D$ finds that $P$ is true and then signals $C$; therefore, $C$ 
becomes active again and takes $32$ items of the buffer.
Proposition \ref{pro:relay_signal} shows that the relay
invariance holds in our automatic-signaling mechanism. 
Thus, {\em signalAll} calls are avoidable in {\em AutoSynch}. 
The problem is now reduced to finding a thread waiting for a condition that is 
true. 

\subsection{Predicate tag} \label{sec:tag}
In order to efficiently find an appropriate thread waiting for a predicate that is
true, we analyze every waiting condition and assign different tags 
to every predicate according to its semantics. These  tags
help us prune predicates that are not true by examining the state 
of the monitor. The idea behind the predicate tag is that, local variables cannot be
changed during the {\em waituntil} period; thus the values of local variables are
used as keys when we evaluate predicates. 
First, we define two types of 
predicates according to their semantics. 
\begin{definition}[Local and shared expression]
    Consider an expression $E(\vec{x}): X \rightarrow \mathbb{D}$, where
    $\mathbb{D}$ represents one of the primitive data types in Java. If $X
    \subseteq L$, then $E$ is a local expression. Otherwise, 
    if $X \subseteq S$, $E$ is a shared expression.  
\end{definition}
We use $SE$ to denote a shared expression, and $LE$ to denote a local 
expression.

\begin{definition}[Equivalence predicate]
    A predicate $P: (SE = LE)$ is an equivalence predicate.
\end{definition}
\begin{definition}[Threshold predicate]
   A predicate $P: (SE\ \boldsymbol{op}\ LE)$ is a threshold predicate, where 
   $\boldsymbol{op}
    \in \{<,\ \le,\ >,\ \ge\}$.
    \end{definition}
Note that, many predicates that are not equivalence or threshold predicates can 
be transformed into them. Consider the predicate $(x - a = y + b)$, where 
$x, y \in S$ and $a,b \in L$. This predicate is equivalent to 
$(x - y = a + b)$ which is an equivalence predicate. Thus, these two types of 
predicates can represent a wide range of conditions in synchronization problems. 

Given an Equivalence or a Threshold predicate, we can apply the {\em globalization} operation 
to derive a constant value on the right hand side of the predicate. 
%
In {\em AutoSynch}, there are three types of tags, $Equivalence$, $Threshold$, and 
$None$. Every $Equivalence$ or $Threshold$ tag represents an equivalence predicate 
or a threshold predicate, respectively. 
If the predicate is neither equivalence nor threshold, it acquires the $None$
tag. For example, consider the $Threshold$ predicate $x + b > 2y + a$ where $a$ 
and $b$ are local variables with values $11$ and $2$.
We first use the globalization to convert it to $(x - 2y > 9)$, which is
represented by the tag $(Threshold,\ x - 2y,\ 9,\ >)$. The formal definition of
tag is as follows. 
\begin{definition}
   A tag is a four-tuple $(M,\ expr,\ key,\ op)$, where  
   \begin{itemize}
      \item $M \in \{Equivalence,\ Threshold,\ None\}$;
      \item $expr$ is a shared expression if 
          $M \in \{Equivalence,\ Threshold\}$; otherwise, $expr= \perp$;
      \item $key$ is the value of a local expression after applying
          globalization if $M \in \{Equivalence,\ Threshold\}$; otherwise, 
          $key= \perp$;
      \item $op \in \{<,\ \le,\ >,\ \ge\}$ if $M = Threshold$; otherwise, 
         $op = \perp$.
   \end{itemize}
\end{definition}
We say that a tag is true (false) if the predicate representing the tag is 
true (false).
 
\subsubsection{Predicate tagging}
A tag is assigned to every conjunction. The tags of 
conjunctions of a predicate constitute the set of tags of the predicate. 
Tags are given to every predicate by the algorithm shown in
Fig.~\ref{fig:tagging}. 
When assigning a tag to a conjunction, the equivalence tag has the highest 
priority. The reason is that the set of values to make an equivalence predicate
true is smaller than the set of values to make a threshold predicate true. The equivalence predicate is true 
only when its shared expression equals a specific value. 
For example, consider an equivalence 
predicate $x = 8$ and a threshold predicate $x > 3$. The predicate $x = 8$ is true only when 
the value of $x$ is $8$, whereas $x > 3$ is true for a much larger set of values.
Therefore, the $Equivalence$ tags can help us prune
predicates that are false more efficiently than other kinds of
tags. 
If a conjunction does not 
include any equivalence predicate, then we check whether it 
includes any threshold predicate. If yes, then a $Threshold$ tag is assigned 
to the conjunction; otherwise, the conjunction has a $None$ tag. 

\begin{figure}[ht!]
    \begin{Verbatim}[fontsize=\footnotesize,gobble=8,frame=lines,
            framesep=3mm]
         tags = empty
         foreach conjunction c 
           if c contains an equivalence predicate se=le
             tag t = (Equivalence, se, globalization(le), null)
           else if c contains a threshold predicate se op le
             tag t = (Threshold, se, globalization(le), op)
           else 
             tag t = (None, null, null, null) 
           add t to tags 
         return tags
    \end{Verbatim}
  \caption{Predicate Tagging}
  \label{fig:tagging}
\end{figure}
Creating all tags for a conjunction is unnecessary. If a conjunction includes 
multiple equivalence predicates or threshold predicates, only one arbitrary 
$Equivalence$ tag or $Threshold$ tag is assigned to the conjunction. 
If there are a large number of tags, then the performance may decrease
because of the cost of maintaining tags. As a result, we assign only one tag to
every conjunction.
Assigning multiple tags to a 
conjunction cannot accelerate the searching process. For example, consider a 
conjunction $(x = 8) \wedge (y = 9)$. If only a tag 
$(Equivalence,\ x,\ 8,\ null)$
is assigned to the conjunction, we check the predicate when the tag is
true. Adding another tag $(Equivalence,\ y,\ 9,\ null)$ cannot accelerate the
searching process since we need to check both the tags. 
 
Note that multiple predicates with a shared conjunct may share a tag. For example,
the predicates $(x=5) \wedge (z \leq 4)$ and $(x=5) \wedge (y \geq 4)$ would have a shared equivalence tag
of $(x=5)$.

\subsubsection{Tag signaling}
Signaling mechanism is based on tags in {\em AutoSynch}. 
Since the equivalence tag is more efficient in pruning the search space than the threshold tag, the
predicates with equivalence are checked prior to the predicates with other 
tags. If no predicate that is true can be found after checking $Equivalence$ 
tags and $Threshold$ tags, our algorithm does the exhaustive search for the 
predicates with a $None$ tag. 

\paragraph{Equivalence tag signaling:}
Observes that, an equivalence predicate becomes true only when its shared 
expression equals the specific value of its local expression after applying
{\em globalization}. For distinct equivalence tags related to the same shared 
expression, at most one tag can be true at a time because the value of its
local expression is deterministic and unique at any time. By 
observing the value of its local expression, the appropriate tag can be 
identified. For example, suppose there are three $Equivalence$ tags for
predicates $x = 3$, $x = 6$, and $x = 8$. We examine $x$ and find that
its value is $8$. Then we know that only the third predicate $x = 8$ is true. Based on this 
observation, for each unique shared expression of an equivalence tag, we 
create a hash table, where the value of the local expression is used as the 
key. By using this 
hash table and evaluating the shared expression at runtime, we can find a
tag that is true in $O(1)$ time if there is any. Then we check the predicates 
with the tag. 


\paragraph{Threshold tag signaling:}
Consider the following example. Suppose there are two 
predicates, $x > 5$ and $x > 3$. We know that if $x > 3$ is false, then 
$x > 5$ cannot be true. Hence, we only need to check the predicate with the 
smallest local expression value for $>$ and $\ge$ operations. Furthermore, 
consider the predicates with the same local expression value but different 
operations, $x > 3$ and $x \ge 3$. The predicate $x > 3$ cannot be true when 
$x \ge 3$ is false; i.e., we only need to check the predicate $x \ge 3$. We use
a min-heap data structure for storing the threshold tags related to a same 
shared expression with $\boldsymbol{op} \in \{>, \ge\}$. If two predicates 
have the same local expression value but different operations, then the predicate 
with $\ge$ is considered to have a smaller value than the 
predicate with $>$ in the min-heap.
Similarly, the max-heap can be used for threshold tags with $\boldsymbol{op}
\in \{<, \le\}$. 



The signaling mechanism for $Threshold$ tag is shown if Fig.~\ref{fig:th_sig}. In
general, the tag in the root of a heap is checked. If the tag is false, all the
descendant nodes are also false. Otherwise, all predicates with the tag
need to be checked for finding a true predicate. To maintain the correctness, 
if no predicate is true, the tag is removed from the heap temporarily. Then the
tag in the position of the new heap root is checked again until a true predicate is found or a
false tag is found. Those tags removed temporarily are reinserted to the heap.
The reason to remove the tags is that the descendants of the tags may also be
true since the tags are true. So we also need to check the descendant tags. For
example, consider the predicates $P_1: (x \ge 5) \wedge (y \ne 1)$ and 
$P_2: (x > 7)$. $P_1$ has the tag $Q_1: (Threshold,\ x,\ 5,\ \ge)$ and $P_2$
has the tag $Q_2: (Threshold,\ x,\ 7,\ >)$. $Q_1$ is the root and $Q_2$ is its
descendant. Suppose at some time instant $x=3$, then $Q_1$ is false; thus, there is no
need to check $Q_2$. Now, suppose $x = 9$ and $y = 1$, then $Q_1$ is true. We
check all predicates that have tag $Q_1$.  Since $P_1$ is false, no predicate
having tag $Q_1$ is true. Then $Q_1$ is removed form the heap temporarily. We find the
new root $Q_2$ is true and $P_2$ that has tag $Q_2$ is also true. We signal 
a thread waiting for $P_2$ and then add $Q_1$ back to the heap. 

\begin{figure}[ht!]
    \begin{Verbatim}[fontsize=\footnotesize,gobble=8,frame=lines,
            framesep=3mm]
         // peek(): retrieve but does not remvoe the root 
         // poll(): retrieve and remove the root 
         list backup = empty;
         tag t = heap.peek();
         while t is true
           foreach predicate p with t
             if p is true
               signal a thread waiting on p
               foreach b in backup 
                 heap.add(b)
               return 
           backup.insert(heap.poll())
           t = heap.peek()
         foreach b in backup 
           heap.add(b)
    \end{Verbatim}
  \caption{Threshold tag signaling}
  \label{fig:th_sig}
\end{figure}

Suppose there are $n$ $Threshold$ tags for a shared expression with different 
keys. Suppose that these tags are assigned to $m$ predicates. The time 
complexity for maintaining the heap is $O(n \log(n))$ 
However, the performance is generally much better because we only
need to check the predicates of the tags in the root position in the most
cases. The time complexity for finding the root is $O(1)$. In the worst case, 
we need to check all predicates; thus, the time complexity is $O(n \log(n) + m)$. 
However, this situation is rare. Furthermore, this algorithm is optimized for
evaluating threshold predicates by sacrificing performance in tag management.

\section{{\em AutoSynch} implementation} \label{sec:imp}
The {\em AutoSynch} implementation involves two parts, the preprocessor and the Java
library of condition manager. The preprocessor,  built using JavaCC
\cite{kod04}, translates  {\em AutoSynch} code to Java code. Our
signal-mechanism is implemented in 
the condition manager library that creates condition variables, and maintains the association 
between predicates and condition variables. Furthermore, predicate tags are 
also maintained by the condition managers. It is the responsibility of the
condition manager to decide which thread should be signaled. 

\subsection{Preprocessor}
The {\em AutoSynch} class provides both mutual exclusion and conditional
synchronization. To maintain these two properties, our preprocessor adds some 
additional variables for any {\em AutoSynch} class. Fig.~\ref{fig:pre_cnst} 
summarizes the definitions of additional variables in the
constructor of an {\em AutoSynch} class. The lock 
variable, $mutex$, is declared for mutual exclusion, which is acquired at the 
beginning of every member function and released before the return statement.
In addition, a condition manager, $condMgr$, is declared for synchronization. 
The details of the condition manager are discussed in the next
section.

\begin{figure}[ht!]
    \begin{Verbatim}[fontsize=\footnotesize,gobble=8,frame=lines,
            framesep=3mm]
        Lock mutex
        ConditionManager condMgr 
        foreach shared predicate P
          tags = AnalyzePredicate(toDNF(P))
          condMgr.registerSharedPredicate(P, tags)
        foreach shared expression E
          condMgr.registerSharedExpression(E)
    \end{Verbatim}
  \caption{The additional variables for an {\em AutoSynch} class}
  \label{fig:pre_cnst}
\end{figure}  

All predicates are transformed to DNF in the preprocessing process by De Morgan's laws and 
distributive law. Then we analyze predicates to derive their tags. The condition 
manager registers the predicates and shared expressions for predicate 
evaluation. 
The shared 
predicates and shared expressions are identified in the preprocessing stage and added 
in the constructor of the class as in Fig.~\ref{fig:pre_cnst}. We add shared
predicates and shared expressions (but not complex predicates) in the construct because their semantics is
static and never changes. A complex predicate is registered dynamically
because its globalization may change according to the value of its local variables at
runtime. In Java, the shared 
predicates and shared expressions are created as inner classes that can access
the shared variables appearing in them with
$isTrue()$ or $getValue()$ functions for the condition manager to evaluate.
The function $isTrue()$ returns the evaluation of the shared predicate and
the function $getValue()$ returns the value of the shared expression.

For every member function of an {\em AutoSynch} class, the {\em mutex.lock()} and
{\em mutex.unlock()} are inserted at the beginning of the function and immediately before
the {\em return} statement, respectively.
%


In the {\em waituntil} statement, the predicate is checked initially. If 
it is true, then the thread can continue. Otherwise, the type of
predicate is checked. If the predicate is complex, then we apply globalization 
to it for deriving a new shared predicate. Then we query the condition manager 
to determine
whether the derived predicate has been added earlier. If not, we add the 
predicate with its tags to the condition manager. Then, the corresponding 
condition variable can be obtained by calling {\em getCondition()} function of the
condition manager. The {\em relaySignal()} function maintains relay invariance 
and signals an appropriate thread. Then, the thread goes into the waiting state 
until the predicate becomes true. After exiting the waiting state, if the 
predicate is complex and the corresponding condition has no other 
waiting thread, and then it is deactivated by the condition manager.

\begin{figure}[ht!]
    \begin{Verbatim}[fontsize=\footnotesize,gobble=8,frame=lines,
            framesep=3mm]
        if P is false 
          if P is a complex predicate 
            P := Globalization(toDNF(P))
            if P is not in condMgr
              tags = AnalyzePredicate(P)
              condMgr.registerComplexPredicate(P, tags)
          C = condMgr.getCondition(P)
          do 
            condMgr.relaySignal()
            wait C
          while P is false
          if P is complex predicate and C has no waiting thread
            condMgr.inactive(P) 
    \end{Verbatim}
    \caption{Preprocessing for a {\em waituntil(P)} statement}
  \label{fig:prep}
\end{figure}

%
%

\subsection{{\em AutoSynch} Java library: condition manager}
The condition manager maintains the predicates and condition variables, and
provides the signaling mechanism. To avoid creating redundant predicates and 
condition variables, predicates that have the same meaning should be mapped to 
the same condition variable. Two predicates are syntax equivalent if they 
are identical after applying globalization. A predicate table, which is implemented by a
hash table, records predicates and their associated condition variables. 

\begin{figure*}[ht!]
  \centering
  \includegraphics[width=180mm]{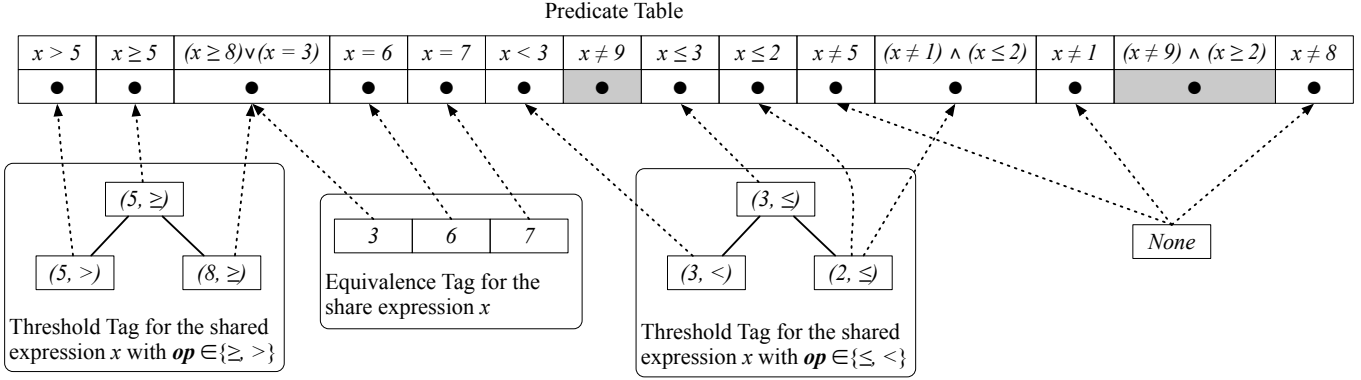}
  \caption{A example of the condition manager in {\em AutoSynch}}
  \label{fig:mgr}
\end{figure*}

When a predicate is added to the condition manager, 
its tags are stored in an appropriate data structure depending upon the type of its tag.
Fig.~\ref{fig:mgr} shows an example. The symbol $\bullet$ indicates 
a condition variable. The gray blank indicates that the predicate is inactive,
that is, no thread waits on it. A hash table is used for
storing equivalence tags with the shared expression $x$. In addition, 
a min-heap and a max-heap are used for storing threshold tags. 


For finding a predicate that is true in Fig.~\ref{fig:mgr}, the value of the
shared expression $x$ is evaluated. We first check the hash table 
(with $O(1)$ time complexity) using the value of the shared expression as the
key. If we find a tag in the hash table, then we evaluate predicates that have
the tag. If there exists a predicate is true, then we signal its corresponding 
condition 
variable. Otherwise, we check the max-heap and the min-heap. If we find that 
both tags in the roots are false, we search for the predicates with the $None$ 
tag exhaustively. If one of these predicates is true
we signal the corresponding condition variable. As can
be expected, the equivalence and threshold tags are helpful for searching 
predicates that are true.   

A predicate must be removed from the tag once no thread waits on 
it to avoid unnecessary predicate evaluation. A threshold tag also needs to be
removed once it has no predicate. 

Predicates may be reused. Instead  of removing those predicate with no waiting
thread, 
we move those predicate to an inactive list. If they are used later, then we 
remove them from the inactive list. Otherwise, when the length of the inactive list exceeds some 
predefined threshold, we remove the oldest predicates from the list. Note that, the shared
predicates are never removed since they are static and are added only at the
constructor. 

\section{Evaluation} \label{sec:eval}
We discuss the experiments conducted for evaluating the performance of {\em
AutoSynch} in this section. We compare the performances of different signaling
mechanisms in three sets of classical conditional synchronization problems. 
The first set of problems relies on only shared predicates for synchronization. 
Next, we explore the performance for problems using complex predicates. 
Finally, we evaluate the problems on which {\em signalAll} calls are required 
in the explicit-signal mechanism. 

\subsection{Experimental environment}
All of the experiments were conducted on a machine with 16 Intel(R) Xeon(R) 
X5560 Quad Core CPUs (2.80 GHz) and 64 GBs memory running Linux 2.6.18. 

Our experiments are {\em saturation} tests \cite{bh05}, in which only
monitor accessing function is performed. That is, no extra work is in the
monitor or out of the monitor. For every experimental setting, we  
perform 25 times, and remove the best and the worst results. Then we compare 
the average runtime for different signaling mechanisms.

\subsection{Signaling mechanisms}
Four implementations using different signaling mechanisms have been 
compared. 
\begin{description}
    \item[Explicit-signal] Using the original Java explicit-signal mechanism. 
    \item[Baseline] Using the automatic-signal mechanism relying on only
        one condition variable. It calls {\em signalAll} to wake
        every waiting thread. Then each waken thread re-evaluates its own 
        predicate after re-acquiring the monitor.
    \item[{\em AutoSynch-T}] Using the approach described in this paper but excluding
        predicate tagging. 
    \item[{\em AutoSynch}] Using the approach described in this paper. 
\end{description}

\subsection{Test problems}
Seven conditional synchronization problems are implemented for evaluating our
approach. 

\subsubsection{Shared predicate synchronization problems}
\begin{description}
    \item[Bounded-buffer \cite{dijk65, dijk71}] This is the traditional 
        bounded-buffer problem. Every producer waits if the buffer is full,
        while every consumer waits if the buffer is empty. 
    \item [Sleeping barber \cite{dijk65, dijk71}] The problem is analogous 
        to a barbershop with one barber. A barber has number of waiting
        chairs. Every time when he finishes cut, he checks whether some
        customers are waiting. If there are, he cuts hair for one customer. If 
        no customers waiting, the barber goes to sleep. Every customer arrives
        and checks what the barber is doing. If the barber is sleeping, then he 
        wakes the barber and has haircut. Otherwise, the customer checks
        whether there is any free waiting chair. If there is, the customer
        waits; otherwise, the customer leaves. 
    \item [$H_2O$ problem \cite{and00}] This is the simulation of water
        generation. Every $H$ atom waits if there is no $O$ atom or another $H$
        atom. Every $O$ atom waits if the number of $H$ atom is less than $2$. 
\end{description}
\subsubsection{Complex predicate synchronization problems}
\begin{description}
    \item[Round-Robin Access Pattern] Every test thread accesses the
        monitor in round-robin order. 
    \item[Readers/Writers  \cite{chp71}] 
    We use the approach given in \cite{bh05}, where a ticket is used
        to maintain the accessing order of readers and writers. Every reader
        and writer gets a ticket number indicating its arrival order. Readers
        and writers wait on the monitor for their turn. 
    \item [Dining philosophers \cite{dijk71}] A number of philosophers are 
        siting around at a table with a dish in front of them and a chopstick 
        in between each philosopher. A philosopher needs to pick two chopsticks at the
        same time for eating and he does not put down a chopstick until he finishes 
        eating. A philosopher that wants to eat must wait if one of his shared 
        chopsticks is hold by another philosopher.
\end{description}
\subsubsection{Synchronization problems required {\em signalAll} in explicit }
\begin{description}
    \item[Parameterized bounded-buffer   \cite{dijk65, dijk71}] The 
        parameterized bounded-buffer problem shown in Fig.~\ref{fig:bb_exp}. 
\end{description}

\subsection{Experimental results}

Fig.~\ref{fig:pc_eval} to \ref{fig:sb_eval} plot the results for the
bounded-buffer, the $H_2O$, and the sleeping barber problem. The y-axis shows
the runtime in seconds. The x-axis represents the number of simulating threads. 
Note that, in the $H_2O$ problem, only one thread simulating an $O$ atom. The
x-axis represents the number of thread simulating $H$ atoms. 
As expected, 
the baseline is much slower than other three signaling mechanisms, which have
similar performance in the bounded-buffer problem and the $H_2O$ problem. This 
phenomenon can be explained as follows. There is only a constant number of
shared predicates in {\em waituntil} statements for automatic-signal mechanisms.  
For example, in the bounded-buffer problem, there are two {\em waituntil} statements 
with global predicates, {\em count $>$ 0} (not empty condition) and {\em count
$<$ buff.length} (not full condition). Therefore, the 
complexity for signaling a thread in {\em AutoSynch} and {\em AutoSynch-T} is 
also constant. Hence, both {\em AutoSynch} and {\em AutoSynch-T} are as 
efficient as the explicit-signal mechanism. An interesting point is that the
performance of the baseline is as efficient as others in the sleeping barber
problem. The reason is that the {\em signalAll} calls of the baseline do not
increase the number of context switches. Whenever a signaled customer
re-acquires the monitor, he can have a haircut since the previous customer has
had haircut. These experiments illustrate that the automatic-signal mechanisms
are as efficient as the explicit-signal mechanisms for synchronization 
problems relying on only shared predicates. 

\begin{figure}[ht!]
  \centering
  \includegraphics[width=70mm]{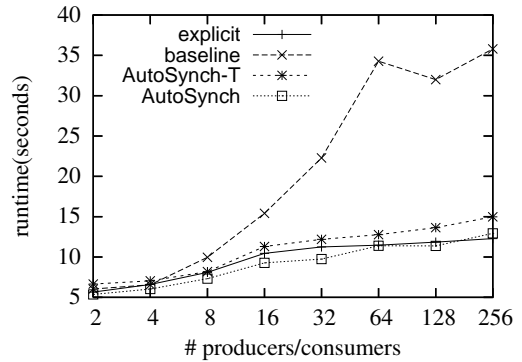}
  \caption{The results of bounded-buffer problem}
  \label{fig:pc_eval}
\end{figure}

\begin{figure}[ht!]
  \centering
  \includegraphics[width=70mm]{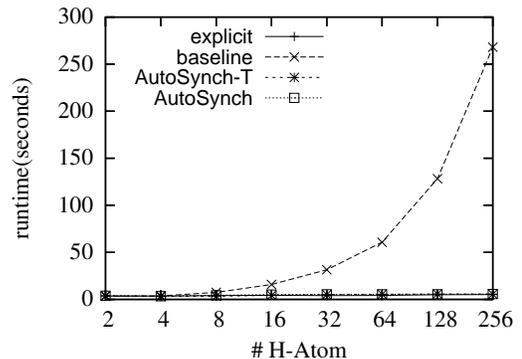}
  \caption{The results of $H_2O$ problem}
  \label{fig:h2o_eval}
\end{figure}

\begin{figure}[ht!]
  \centering
  \includegraphics[width=70mm]{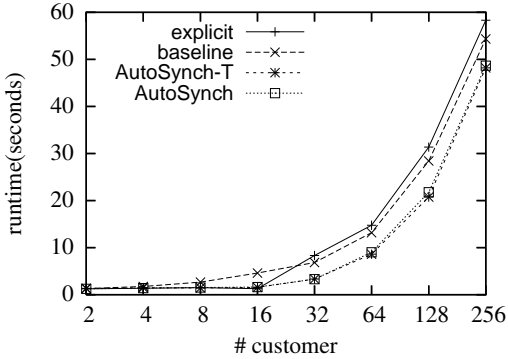}
  \caption{The results of sleeping barber problem}
  \label{fig:sb_eval}
\end{figure}

Fig.~\ref{fig:rr_eval} to \ref{fig:dp_eval} presents the experimental results for
the round-robin access pattern, the readers/writers problem, 
and the dining philosophers problem. The result of the
baseline is not plotted in these figures since its performance is extremely
inefficient in comparison to other mechanisms.  In this set of experiments, 
the explicit-signal mechanism has an advantage since it can explicitly signal 
the next thread to enter the monitor. For example, in the round-robin access
patter, an array of condition variables is used for associating the id of each
thread and its condition variable. Each thread waits on its condition variable
until its turn. When a thread leaves the monitor, it signals the condition 
variable of the next thread. As can be seen, the performance of 
explicit-signal mechanism is steady as the number of thread increases in
the round-robin access pattern and the reader/writers problem. In 
{\em AutoSynch-T}, its runtime increases significantly as the 
number of thread increase. For {\em AutoSynch}, the performance is slower than 
the explicit-signal mechanism between 1.2 to 2.6 times for the round-robin access
pattern. However, the performance of {\em AutoSynch} does not decrease as the 
number of threads increases. Note that, in the readers/writers problem, the
{\em AutoSynch-T} is more efficient than {\em AutoSynch} when the number of
threads is small. The reason is that {\em AutoSynch} sacrifices performance for
maintaining predicate tags. The benefit of predicate tagging increases as the
number of threads increases. Another interesting point is that the
performance of the explicit signal mechanism does not outperform implicit
signal mechanisms much in the dining philosophers problem. The reason is that
a philosopher only competes with two other philosophers sitting near him even
when the number of philosophers increases.

\begin{figure}[ht!]
  \centering
  \includegraphics[width=70mm]{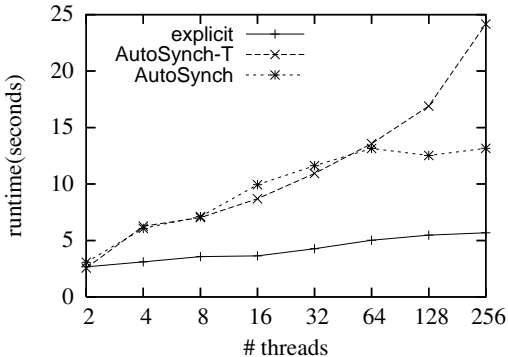}
  \caption{The results of round-robin access pattern}
  \label{fig:rr_eval}
\end{figure}

\begin{figure}[ht!]
  \centering
  \includegraphics[width=70mm]{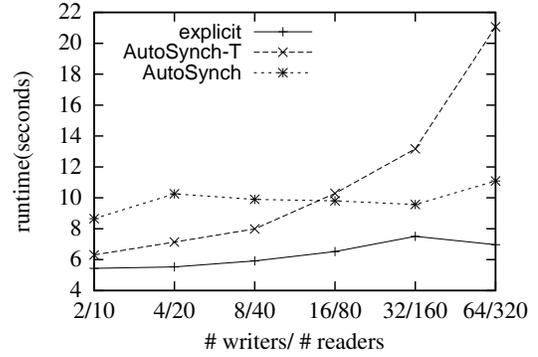}
  \caption{The results of readers/writers problem}
  \label{fig:rw_eval}
\end{figure}


\begin{figure}[ht!]
  \centering
  \includegraphics[width=70mm]{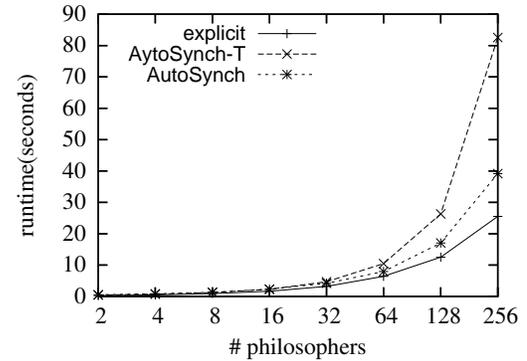}
  \caption{The results of dining philosophers problem}
  \label{fig:dp_eval}
\end{figure}

Table \ref{tab:cpu} presents the CPU usage (profiled by YourKit \cite{yourkit}) 
for the round-robin access pattern with 128 threads. The {\em relaySignal} is the
process of deciding which thread should be signaled in both {\em AutoSynch} and
{\em AutoSynch-T}. {\em Tag Mger} is the computation for
maintaining predicate tags in {\em AutoSynch}. 
As can be seen, the 
predicate tagging significantly improves the process for finding a predicate 
that is true. The CPU time of {\em relaySingal} process is reduced $95\%$ with
a slightly cost in tag management. 

\begin{table*}[ht!]
   \centering
   \begin{tabular}{|c||c|c||c|c||c|c||c|c|c|c|c|}
      \hline 
      & \multicolumn{2}{c||}{await} & \multicolumn{2}{c||}{lock} & 
        \multicolumn{2}{c||}{relaySignal} & \multicolumn{2}{c|}{Tag Mger} &
        \multicolumn{2}{c|}{others} & total \\
      \hline
         & T & \% & T & \% & T & \% & T & \% & T & \% & T \\
      \hline 
      \hline 
      explicit & $21365$ & $99.7\%$ & $28$ & $0.15\%$ & $NA$ & $NA$ & $NA$ &
      $NA$  & $28$ & $0.15\%$ & $21433$ \\
      \hline 
      {\em AutoSynch-T} & $410377$ & $98.5\%$ & $3140$ & $0.7\%$ & $2108$ & $0.5\%$
      & $NA$ & $NA$ & $1033$ & $0.2\%$ & $416658$\\
      \hline 
      {\em AutoSynch} & $96754$ & $98.8\%$ & $812$ & $0.8\%$ & $112$ & $0.1\%$ & 
      $124$ & $0.1\%$ & $148$ & $0.02\%$ & $97950$\\
      \hline 
   \end{tabular}
   \caption{The CPU usage for the round robin access pattern}
   \label{tab:cpu}
\end{table*}

In Fig.~\ref{fig:rpc_eval}, we compare the results of the parameterized 
bounded-buffer in which {\em signalAll} calls are required in the 
explicit-signal mechanism. In this experiment, there is one producer, which
randomly puts 1 to 128 items every time. The y-axis indicates the number of 
consumers. Every consumer randomly takes 1 to 128 items every time. As can be
seen, the performance of the explicit-signal mechanism decreases as the number
of consumers increases. {\em AutoSynch} outperforms the explicit-signal 
mechanism by 26.9 times when the number of thread is 256. This can be explained 
by Fig.~\ref{fig:csrpc_eval} that depicts the number of contexts switches. The
number of context switches increases in the explicit-signal
mechanism in which the number of context switches is around 2.7 millions when 
thread is 256. However, the numbers of context switches are stable in {\em
AutoSynch} even the number of threads increase. It has around 5440 context
switches when the number of thread is 256. This experiment demonstrates
that the number of context switches can be dramatically reduced and the
performance can be increased in {\em AutoSynch} for the problems required 
{\em signalAll} calls in the explicit-signal mechanism. 

\begin{figure}[ht!]
  \centering
  \includegraphics[width=70mm]{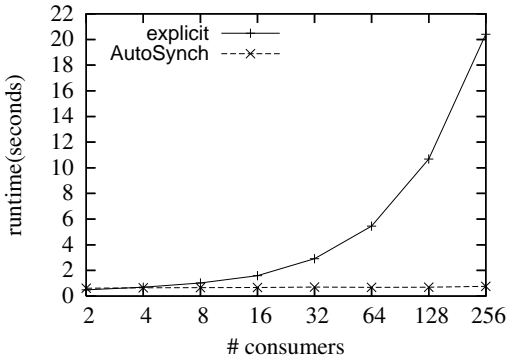}
  \caption{The results of the parameterized bounded-buffer problem}
  \label{fig:rpc_eval}
\end{figure}

\begin{figure}[ht!]
  \centering
  \includegraphics[width=70mm]{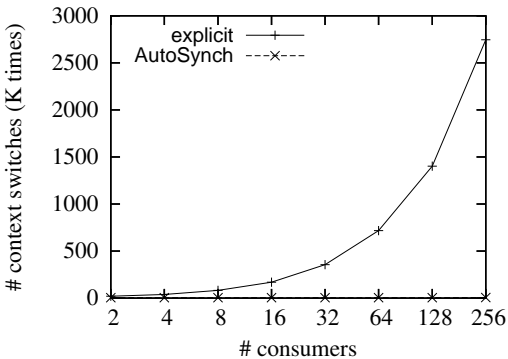}
  \caption{The number of context switches of the parametrized bounded-buffer 
  problem}
\label{fig:csrpc_eval}
\end{figure}

\section{Conclusions} \label{sec:conclu}
In this paper, we have proposed {\em AutoSynch} framework that supports 
automatic-signal mechanism with {\em AutoSynch} class and {\em waituntil} statement.
{\em AutoSynch} uses the {\em globalization} operation to enable the complex predicate 
evaluation in every thread. Next, it provides {\em relay invariance} that some
thread waiting for a condition has met is always signaled to avoid {\em signalAll}
calls. {\em AutoSynch} also uses predicate tag to accelerate the process in deciding
which thread should be signaled. 

To evaluate the effectiveness of {\em AutoSynch}, we built a prototype implementation
using JavaCC \cite{kod04}, Java Compiler Compiler,  and applied it to seven 
conditional synchronization problems. The experimental results indicate that 
{\em AutoSynch} implementations of these problems perform significantly better than
other automatic-signal monitors. Even though {\em AutoSynch} is around 2.6 times 
slower than the explicit in the worst case of our experiments, {\em AutoSynch} is
around 26.9 times faster than the explicit-signal in the parameterized 
bounded-buffer problem that relies on {\em signalAll} calls. 

In the future, we plan to optimize our framework through using the architecture
information. For example, we can get the number of cores of a machine, and
then limit the number of executing threads to avoid unnecessary contention. Our 
current implementation of {\em AutoSynch} is built upon constructs provided 
by Java. Thus, there is possibility of further performance improvement if the 
approach was to be implemented within the JVM. 

\section*{Acknowledgement}
We thank Himanshu Chauhan, Yen-Jung Chang, John Bridgman, and Craig Chase for 
the helpful comments to improve the paper.

\appendix
%
%
%
%
%
%
%

\end{document}